\newtheorem{theorem}{Theorem}
\newtheorem{example}{Example}
\newtheorem{remark}{Remark}
\newcommand\xqed[1]{%
  \leavevmode\unskip\penalty9999 \hbox{}\nobreak\hfill
  \quad\hbox{#1}}
\newcommand\exampleend{\xqed{$\triangle$}}
\newcommand{\real}{\mathbb{R}}
\newcommand{\E}{\mathbb{E}}
\newcommand{\tr}{\mathrm{tr}}
\newcommand{\re}{\mathrm{Re}}
\newcommand{\im}{\mathrm{Im}}
\newcommand{\diag}{\mathrm{diag}}
\title{\LARGE \bf
Dynamic Brain Networks with Prescribed Functional Connectivity
}
\author{U. Casti, G. Baggio, D. Benozzo, S. Zampieri, A. Bertoldo, A. Chiuso
\thanks{The authors are with the Department of Information Engineering, University of Padova, Italy. This research was supported by the DEI Proactive grant ``Personalized whole brain models for neuroscience: inference and validation'' from the Department of Information Engineering, University of Padova. }%
}
\begin{document}

\maketitle
\thispagestyle{empty}
\pagestyle{empty}

\begin{abstract}
In this paper, we consider stable stochastic linear systems modeling whole-brain resting-state dynamics. We parametrize the state matrix of the system (effective connectivity) in terms of its steady-state covariance matrix (functional connectivity) and a skew-symmetric matrix $S$.  We examine how the matrix $S$ influences some relevant dynamic properties of the system. Specifically, we show that a large $S$ enhances the degree of stability and excitability of the system, and makes the latter more responsive to~high-frequency~inputs. 

\end{abstract}

\section{Introduction}

Various network models have been proposed in neuroscience for describing the brain organization and activity \cite{bullmore2009complex,friston2011functional}. A commonly used one is the Structural Connectivity (SC) model, which is based on brain anatomy. This model builds a graph whose nodes represent brain regions (i.e., populations of strongly interconnected neurons) and edges anatomical connections between them. 
Although the SC model proved to be an extremely useful description of the brain, its interpretative power is~limited~by~its~static~nature. 

Modern noninvasive imaging techniques, such as functional magnetic resonance imaging (fMRI), electroencephalography (EEG), and magnetoencephalography (MEG) generate data, in the form of time series, describing the neural activity of brain regions \cite{bowman2014brain}. 
It can be seen that the activity of some of them appear to be correlated and hence we can build a graph linking regions that exhibit such a correlation. 
The network obtained in this way is called Functional Connectivity (FC), since this coupling can be interpreted as a functional cooperation between regions. Notice that the correlation between regions changes over time according to the task the brain is accomplishing. For this reason, much research on this field has focused on the resting-state Functional Connectivity (rsFC), that is obtained from measurements recorded in the absence of stimuli or tasks \cite{bijsterbosch2017introduction}.
The degree of correlation between regions can be mathematically specified in different ways. One simple and natural way is by means of the statistical correlation between the brain activity signals. In whatever way it is obtained, the FC matrix is always symmetric and hence unable to capture the causal~relationship~between~brain~regions.

A model that attempts to unveil these causal interactions is the so-called Effective Connectivity (EC) model. It is a generative model in the sense that it provides a mathematical description by which, in principle, it is possible to replicate the observed signals. Different dynamical models can be employed to this aim, the simplest one being a linear stochastic model known as linear Dynamic Causal Model (linear DCM) \cite{friston2003dynamic}. In this case, finding the EC consists in finding the linear system that best fits the observed data (which is the standard goal of system identification), e.g., see \cite{frassle2018generative,prando2020sparse,gindullina2021estimating}. In this setup, the EC model is specified by the state interaction matrix of the estimated linear system and the covariance matrix of its driving noise. 

Notice that, since an EC model can in principle simulate the brain activity, it can also predict the FC in the form of the (steady-state) covariance matrix associated with the linear stochastic system. However, the EC model contains more information than its associated FC.  How to extract this additional information and how to translate it in terms of macroscopic properties of the brain is currently an open problem in neuroscience. Motivated by this problem, in this paper we consider linear stochastic systems with fixed FC (i.e. with fixed steady-state covariance) and analyze how the dynamic properties of the system are affected by the remaining degrees of freedom of the model, which are shown to be encoded in the entries~of~a~skew-symmetric~matrix.

\noindent \textbf{Contribution.} The starting point of our analysis is a parametrization of the state matrix $A$ of a stable linear stochastic system in terms of a positive definite matrix $\Sigma$, representing its steady-state covariance, and a skew-symmetric matrix $S$. We then examine how the ``size'' of $S$ affects some dynamic properties of the system. Specifically, we show through analytical and numerical results that, as $S$ increases, (i) the stability margin of the system, as measured by the modulus of the largest real part of the eigenvalues of $A$, grows, (ii) the transient amplification (or excitability) of the system, as quantified by the numerical abscissa of $A$, increases, (iii) the spectral energy of the system shifts from low to high frequencies. Altogether, our findings suggest that $S$ may play a crucial role in shaping brain dynamics.

\noindent \textbf{Related work.} A few works in neuroscience tried to extract, interpret, and analyze information from EC that goes beyond functional relationships between brain regions. In particular, in \cite{friston2011network} the authors use the skew-symmetric part of the EC matrix to build a hierarchy between brain regions. More closely related to our work are \cite{lin2017differential,chen2022dynamical}, which propose the Differential Covariance (DC) as a tool to infer directionality of interactions between brain regions. The DC is a matrix whose entries describe the correlation between the activity of a brain area and the variation of the activity occurring in another one. As it will be clear later, the DC matrix is strongly related to the matrix $S$ investigated~in~this~paper.

\noindent \textbf{Notation.} Given $x\in\mathbb{C}$, $\re[x]$ and $\im[x]$ denote the real and imaginary part of $x$, respectively. The symbol $\imath$ stands for the imaginary unit. 
Given a matrix $A\in\mathbb{C}^{n\times m}$ we denote with $A^\top$ the transpose of $A$ and with $A^*$ the conjugate transpose of $A$. For an Hermitian matrix $A=A^*$, $\lambda_{\max}(A)$ and $\lambda_{\min}(A)$ denote the largest and smallest eigenvalue of $A$, respectively. A positive definite (semidefinite) matrix $A$ is denoted by $A\succ 0$ ($A\succeq 0$, respectively). We let $\|A\|$ denote the 2-norm of a matrix, $\tr(A)$ the trace of $A$, $I_n$ the $n$-dimensional identity matrix (the subscript $n$ will be dropped when clear from the context), and $\diag(d_1,\dots,d_n)$ the diagonal matrix with entries $d_1,\dots,d_n$ on the diagonal. We say that a matrix $A\in\mathbb{R}^{n\times n}$ is Hurwitz stable if all the eigenvalues of $A$ have strictly negative real part.


\section{Preliminaries}\label{eq:preliminaries}


We consider the continuous-time linear time-invariant stochastic system
\begin{align}\label{eq:model}
    \dot x(t) &= A x(t) + w(t),
\end{align}
where $x\in\mathbb{R}^n$ is the vector containing the states of the network nodes, $A\in\mathbb{R}^{n\times n}$ is the state interaction matrix, and $w$ is a zero-mean white noise vector with positive definite covariance matrix $\mathbb{E}[w(t)w(t)^\top]=:\Sigma_w\succ 0$.\footnote{Rigorosuly speaking, the equation \eqref{eq:model} should be intended in the differential form $\mathrm{d}x = A x\mathrm{d}t + \mathrm{d}w$, where $\mathrm{d}w$ is a Wiener process, i.e., a process with stationary orthogonal increments, e.g. see \cite[Chap. 3, Sec. 4]{astrom1970stochastic}.}

When modeling resting-state brain activity, \eqref{eq:model} is referred to as linear DCM \cite{friston2003dynamic}, in which $x$ contains the states of brain regions, $A$ is the EC matrix encoding causal relationships between brain regions, and $\Sigma_w = \sigma^2 I$, $\sigma\in\mathbb{R}$. The DCM model also contains an output equation of the form  
$y(t) = [h*x](t) + v(t),$
where $y\in\mathbb{R}^n$ is the Blood-Oxygen-Level-Dependent (BOLD) signal, $h\colon \mathbb{R}^{n}\to \mathbb{R}^n$ is the hemodynamic function mapping brain activity to the BOLD signal, the symbol $*$ stands for convolution, and $v$ is white measurement noise.  

In this paper, $A$ is assumed to be Hurwitz stable. This ensures the existence of a positive definite steady-state or stationary state covariance matrix:
\begin{align*}
    \Sigma &:= \lim_{t\to \infty} \mathbb{E}\left[x(t)x(t)^\top\right]\succ 0,
\end{align*}
which can be computed as the solution of the algebraic Lyapunov equation \cite[Thm.~6.1]{astrom1970stochastic}:
\begin{align}\label{eq:lyap}
    A\Sigma + \Sigma A^\top +\Sigma_w=0.
\end{align}

In linear DCM, $\Sigma$ can be seen as a measure of functional strength between brain regions since it is related to the steady-state covariance of the output which (up to a normalization) coincides with the FC. Therefore, $\Sigma$ can be thought of as a proxy of the FC.

\section{Parametrizing systems with prescribed $\Sigma$} \label{sec:parametrization}

Consider the system \eqref{eq:model} and define the sets
\begin{align}\label{eq:sets}
\begin{aligned}
\mathcal{A} &:= \{A\in\real^{n\times n} \,:\, A \text{ Hurwitz stable}\},\\
\mathcal{P} &:= \{P\in\real^{n\times n} \,:\, P=P^\top\succ 0\}, \\
\mathcal{S} &:= \{S\in\real^{n\times n} \,:\, S=-S^\top\}.
\end{aligned}
\end{align}
Our first result provides a parametrization of Hurwitz stable state matrices $A$ in terms of a pair of matrices, namely a positive definite matrix and a skew-symmetric one. 

\begin{theorem}{\em\bfseries(Parametrization of Hurwitz stable $A$'s)}\label{thm-param}
Consider the system \eqref{eq:model} and the sets defined in \eqref{eq:sets}. For any $\Sigma \in\mathcal{P}$ and $S\in\mathcal{S}$, 
\begin{align}\label{eq:param}
    A=\left(-\frac{1}{2} \Sigma_w + S\right)\Sigma^{-1} \in \mathcal{A},
\end{align}
and $\Sigma$ solves \eqref{eq:lyap}. Further, the map
\begin{align}\label{eq:bijection}
\begin{aligned}
    f\colon \mathcal{P}\times \mathcal{S} &\to \mathcal{A},\\
    \ (\Sigma,S) &\mapsto A=\left(-\frac{1}{2} \Sigma_w + S\right)\Sigma^{-1},
\end{aligned}
\end{align}
is a bijection.
\end{theorem}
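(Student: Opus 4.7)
The plan is to verify the claims in three short steps: first check the Lyapunov identity by direct computation, then deduce Hurwitz stability from it, and finally establish bijectivity by exhibiting an explicit inverse of $f$.

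For the Lyapunov part, I would multiply \eqref{eq:param} on the right by $\Sigma$ to get $A\Sigma = -\tfrac{1}{2}\Sigma_w + S$. Transposing and exploiting $\Sigma_w^\top = \Sigma_w$, $S^\top = -S$, and $\Sigma^\top = \Sigma$ yields $\Sigma A^\top = -\tfrac{1}{2}\Sigma_w - S$. Adding the two expressions gives $A\Sigma + \Sigma A^\top = -\Sigma_w$, which is exactly \eqref{eq:lyap}.

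Next, I would conclude $A\in\mathcal{A}$ by invoking the converse Lyapunov theorem. Conjugating the identity just proved by $\Sigma^{-1}$ gives $A^\top \Sigma^{-1} + \Sigma^{-1} A = -\Sigma^{-1}\Sigma_w\Sigma^{-1} \prec 0$, and since $\Sigma^{-1}\succ 0$, this is a strict Lyapunov certificate, so all eigenvalues of $A$ have strictly negative real part.

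For bijectivity I would construct a two-sided inverse $g\colon \mathcal{A}\to \mathcal{P}\times \mathcal{S}$. Given $A\in\mathcal{A}$, standard Lyapunov theory (e.g.\ \cite[Thm.~6.1]{astrom1970stochastic}) guarantees a unique $\Sigma_A\in\mathcal{P}$ solving \eqref{eq:lyap}. Define $S_A := A\Sigma_A + \tfrac{1}{2}\Sigma_w$; transposing and using \eqref{eq:lyap} gives $S_A + S_A^\top = A\Sigma_A + \Sigma_A A^\top + \Sigma_w = 0$, so $S_A\in\mathcal{S}$, and the definition of $S_A$ is exactly the statement $A = f(\Sigma_A, S_A)$, proving $f\circ g = \mathrm{id}_{\mathcal{A}}$. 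Conversely, for $(\Sigma,S)\in\mathcal{P}\times\mathcal{S}$, the first two steps show that $A = f(\Sigma, S)$ is Hurwitz and that $\Sigma$ solves \eqref{eq:lyap}; by uniqueness of the Lyapunov solution, $\Sigma_A = \Sigma$, and then $S_A = A\Sigma + \tfrac{1}{2}\Sigma_w = S$ by \eqref{eq:param}, giving $g\circ f = \mathrm{id}_{\mathcal{P}\times\mathcal{S}}$.

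The argument is essentially mechanical; the only point requiring minor care is the invocation of uniqueness of the positive definite Lyapunov solution for a Hurwitz $A$, which is what makes the inverse map $g$ well defined. No step looks like a genuine obstacle.
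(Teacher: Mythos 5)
Your proposal is correct and follows essentially the same route as the paper: the Lyapunov identity is verified by the same symmetric/skew-symmetric bookkeeping, stability comes from the standard converse Lyapunov theorem (which the paper simply cites rather than spelling out the certificate $\Sigma^{-1}$), and your inverse map $g$ with $S_A = A\Sigma_A + \tfrac{1}{2}\Sigma_w$ is exactly the paper's $S = \tfrac{1}{2}(A\Sigma - \Sigma A^\top)$, i.e.\ the skew-symmetric part of $A\Sigma$, with bijectivity likewise resting on uniqueness of the positive definite Lyapunov solution. Your two-sided-inverse packaging is marginally more explicit than the paper's surjectivity-plus-injectivity phrasing, but the underlying argument is identical.
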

\begin{proof}
By construction, the matrices $\Sigma$ and $A$ in \eqref{eq:param} satisfy the Lyapunov equation \eqref{eq:lyap}. 
Since $\Sigma\succ 0$ and $\Sigma_w\succ 0$, it follows that $A\in\mathcal{A}$ \cite[Thm. 8.2]{hespanha2018linear}. To prove that \eqref{eq:bijection} is a bijection observe that, for any $A\in\mathcal{A}$, there exists a unique $\Sigma\in\mathcal{P}$ satisfying \eqref{eq:lyap} \cite[Thm. 8.2]{hespanha2018linear}. Further, it holds
\begin{align}
    A\Sigma &= \frac{1}{2}\left(A\Sigma + \Sigma A^\top\right) + \frac{1}{2}\left(A\Sigma - \Sigma A^\top\right)\\
    &= -\frac{1}{2}\Sigma_w + S,\label{eq:proof-bijection}
\end{align}
where $S:=\frac{1}{2}\left(A\Sigma - \Sigma A^\top\right)\in\mathcal{S}$. Thus, for any $A\in\mathcal{A}$ there exists $(\Sigma,S)\in \mathcal{P}\times \mathcal{S}$ such that \eqref{eq:param} holds, that is, \eqref{eq:bijection} is surjective. Finally, consider $A_1,A_2\in\mathcal{A}$ and let $\Sigma_1, \Sigma_2\in \mathcal{P}$ be the corresponding solutions to \eqref{eq:lyap} and $S_1:=\frac{1}{2}\left(A_1\Sigma_1 - \Sigma_1 A_1^\top\right)$, $S_2:=\frac{1}{2}\left(A_2\Sigma_2 - \Sigma_2 A_2^\top\right)$. If $A_1=A_2$, then $\Sigma_1=\Sigma_2$ by the uniqueness of the solution to \eqref{eq:lyap} and, from \eqref{eq:proof-bijection}, $A_1\Sigma_1=A_2\Sigma_2$ implies $S_1=S_2$. This proves injectivity of \eqref{eq:bijection} and concludes the proof.
\end{proof}

As a consequence of the previous result, all state matrices of \eqref{eq:model} generating a fixed steady-state covariance $\Sigma\succ 0$ are parametrized by the skew-symmetric matrix $S$ of Equation \eqref{eq:param}. Further, for a fixed $A\in\mathcal{A}$ in \eqref{eq:model} it is possible to obtain the corresponding parameters $\Sigma$ and $S$ as follows: $\Sigma$ corresponds to the (unique) solution to \eqref{eq:lyap} and $S$ to the skew-symmetric part of $A\Sigma$, i.e., $S=\frac{1}{2}\left(A\Sigma-\Sigma A^\top\right)$.

\begin{remark}{\em\bfseries(Statistical interpretation of $S$)}\label{rmk:statistical}
The skew-symmetric matrix $S$ in \eqref{eq:bijection} can be also characterized from a statistical viewpoint. In fact, it can be shown that
\begin{align*}
    \lim_{t\to \infty}\mathbb{E}[\dot x(t) x(t)^\top] &= 
    A\Sigma = -\frac{1}{2} \Sigma_w + S, 
\end{align*}
The latter equation yields
\begin{align*}
    S = \frac{1}{2} \Sigma_w + \lim_{t\to \infty} \E[\dot x(t) x(t)^\top].
\end{align*}
In \cite{lin2017differential} the matrix $\lim_{t\to \infty} \E[\dot x(t) x(t)^\top]$ is termed Differential Covariance (DC).  Note in particular that when $\Sigma_w$ is diagonal (as in the DCM model) the off-diagonal entries of $S$ coincide with those of the DC.

\begin{example}{\em\bfseries(Matrices $S$ in EC mice data)}
We consider the experimental EC matrices employed in \cite{benozzo2023macroscale} which are estimated from BOLD recordings in mice. In Fig.~\ref{fig:box}, we plot the norm of the $S$ matrix constructed from the parametrization in Eq.~\eqref{eq:param} for a sample of $N=20$ subjects, and compare it with an $S$ extracted from randomly generated stable matrices (see the caption of Fig.~\ref{fig:box} for details). The mice ECs exhibit a norm of $S$ substantially larger than that of the random case.
In the next sections, we elucidate the role played by a ``large'' $S$ on~some~relevant~dynamic~properties~of~the~system.\exampleend
    \begin{figure}\label{fig:box}
        \centering
        \includegraphics[scale=0.8]{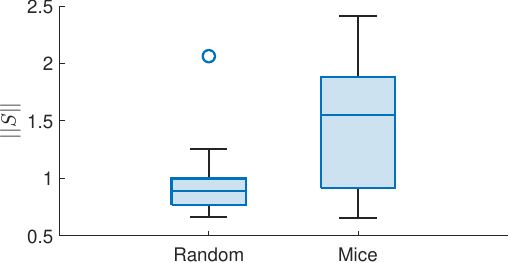}
        \caption{
        The figure shows the norm of $S$ in the parametrization \eqref{eq:param} for mice ECs matrices and randomly generated Hurwitz stable matrices. The entries of the random matrices are drawn independently from a standard normal distribution. To enable a fair comparison with the mice ECs, these matrices are shifted and rescaled so that they possess the same maximum real part (stablity margin) and imaginary part of the eigenvalues of mice ECs (averaged over the $N$ subjects). The box plot shows the results obtained for $20$ realizations of the random ensemble and $20$ mice subjects with $\Sigma_w=I$.\vspace{-0.25cm}}
        \label{fig:my_label}
    \end{figure}
\end{example}

\end{remark}

\section{Role of $S$ in the stability of \eqref{eq:model}}\label{sec:stability}

We consider the effect of large perturbations on the skew-symmetric term $S$ in \eqref{eq:param} on the stability of $A$. We first note that, for $S=0$, the eigenvalues of $A$ are real since $\Sigma_w\Sigma^{-1}$ is similar to a symmetric matrix. Further, the average of the eigenvalues is independent of $S$, namely
$$
\frac{1}{n}\tr\left(A\right) = -\frac{1}{2n}\tr\left(\Sigma_w\Sigma^{-1}\right),
$$ 
since $S\Sigma^{-1}$ is similar to a skew-symmetric matrix and therefore has trace equal to zero. When the entries of $S$ increase, the eigenvalues of $A$ become complex and their real part tend to get closer to the average $-\frac{1}{2n}\tr\left(\Sigma_w\Sigma^{-1}\right)$, which means that the system becomes more stable, {i.e., the largest real part of the eigenvalues of $A$ becomes more negative}. This is clearly seen in the 2-dimensional case,~as~discussed~below.

\begin{example}{\em\bfseries(2-dimensional~system)}
\label{ex:stab}
Consider the parametrization of $A\in\real^{2\times 2}$ given in Theorem \ref{thm-param} and let 
$$
S=\alpha \begin{bmatrix}0 & 1\\ -1 & 0\end{bmatrix}, \ \ \ \alpha\ge 0.
$$
We fix $\Sigma_w = \sigma^2 I$ and 
we take $\Sigma^{-1} = \diag(d_1,d_2)$ with $d_1\geq d_2 > 0$.
Then
$$
\label{eq:simA}
A = \begin{bmatrix} -\frac{\sigma^2}{2}d_1 & \alpha d_2 \\ -\alpha d_1 & -\frac{\sigma^2}{2}d_2\end{bmatrix}.
$$
The eigenvalues of $A$ are the roots of its characteristic polynomial
$$
p(\lambda) = \lambda^2 + \frac{\sigma^2}{2}(d_1+d_2)\lambda +\left(\frac{\sigma^4}{4}+\alpha^2\right)d_1d_2.
$$
It follows that the eigenvalues are real if $0\le\alpha\le\left(d_1-d_2\right)/\sqrt{d_1d_2}$, and complex conjugate with identical real part $-\frac{\sigma^2}{4}(d_1+d_2)$, otherwise. The behavior of the eigenvalues as a function of $\alpha\geq0$ is illustrated in Fig.~\ref{fig:rlocus}.~\exampleend
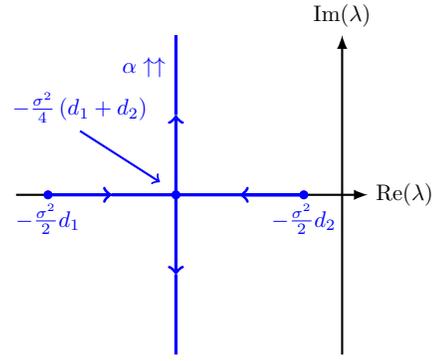
\begin{figure}[!h]
    \centering
    \begin{tikzpicture}[thick,scale=0.85, every node/.style={transform shape}]
    \draw[-latex] (0,0) -- (5.5,0) node[right] {$\re(\lambda)$}; 
    \draw[-latex] (5.1,-2.5) -- (5.1,2.5) node[above] {$\im(\lambda)$}; 
    
    \node[blue,scale=1] at (2,2) {$\alpha\uparrow\uparrow$};
    \node[inner sep=0.05cm,circle,fill=blue] at (0.5,0) {};
    \node[inner sep=0.05cm,circle,fill=blue] at (4.5,0) {};
    \node[inner sep=0.05cm,circle,fill=blue] at (2.5,0) {};
    \draw[very thick,blue,->] (0.5,0) node[below] {$-\frac{\sigma^2}{2}d_1$} -- (1.5,0);
    \draw[very thick,blue] (1.5,0) -- (2.5,0);
    \draw[very thick,blue,->] (4.5,0) node[below]{$-\frac{\sigma^2}{2}d_2$} -- (3.5,0);
    \draw[very thick,blue,] (4.5,0) -- (2.5,0) ;
    \draw[,blue,->] (1,1) node[above]{$-\frac{\sigma^2}{4}\left(d_1+d_2\right)$} -- (2.25,0.2);
    \draw[very thick,blue,->] (2.5,0) -- (2.5,1.25);
    \draw[very thick,blue] (2.5,1.25) -- (2.5,2.5);
    \draw[very thick,blue,->] (2.5,0) -- (2.5,-1.25);
    \draw[very thick,blue] (2.5,-1.25) -- (2.5,-2.5);
    \end{tikzpicture}
    \caption{Behavior of the eigenvalues of $A$ for the system in Example \ref{ex:stab} and increasing values of $\alpha\geq0$.\vspace{-0.25cm}}
    \label{fig:rlocus}
\end{figure}
\end{example}

The next result provides asymptotic formulas for the eigenvalues of $A$ when the entries of $S$ grow, in the general $n$-dimensional case.

\begin{theorem}{\em\bfseries (Behavior of eigenvalues of $A$ for large $S$)}
\label{thm:stab} Consider the parametrization of $A$ given in Theorem \ref{thm-param} and let $S=\alpha \bar{S}$, where $\alpha\in\real$ and $\bar{S}\in\mathcal{S}$, $\bar{S}\ne 0$. Let $\{\lambda_i\}_{i=1}^n$ be the eigenvalues of $A$ and $\mu_i$, $u_i$ be the (imaginary) eigenvalues and the (normalized) eigenvectors of $\Sigma^{-1/2}\bar{S}\Sigma^{-1/2}$. {Assume that the eigenvalues of $\Sigma^{-1/2}\bar{S}\Sigma^{-1/2}$ are simple.} Then,~as~$\alpha\to\infty$,
\begin{align}\label{eq:limit}
    &\re[\lambda_i] \to -\frac{1}{2}u_i^* \Sigma^{-1/2}\Sigma_w\Sigma^{-1/2} u_i,\\
    &\alpha^{-1}\im[\lambda_i] \to \im[\mu_i].\notag
\end{align}
\end{theorem}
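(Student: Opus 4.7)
The plan is to reduce the problem to a standard first-order eigenvalue perturbation question via a symmetrizing similarity, and then invoke the perturbation theory for normal matrices with simple eigenvalues. First I would apply the similarity $T=\Sigma^{1/2}$: since $T^{-1}AT$ has the same spectrum as $A$, the eigenvalues $\{\lambda_i\}$ of $A$ coincide with those of
\begin{equation*}
T^{-1}AT \;=\; \Sigma^{-1/2}\Bigl(-\tfrac{1}{2}\Sigma_w+\alpha\bar{S}\Bigr)\Sigma^{-1/2} \;=\; -\tfrac{1}{2}M+\alpha N,
\end{equation*}
where I set $M:=\Sigma^{-1/2}\Sigma_w\Sigma^{-1/2}\succ 0$ (real symmetric) and $N:=\Sigma^{-1/2}\bar{S}\Sigma^{-1/2}$ (real skew-symmetric). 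This is the key reduction: the problem becomes one about the eigenvalues of a symmetric plus scaled-skew-symmetric matrix, which is much cleaner than working with the non-normal $A$ directly.

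Next I would exploit the structure of $N$. As a real skew-symmetric matrix, $N$ is normal ($NN^{*}=-N^{2}=N^{*}N$) and therefore unitarily diagonalizable with a purely imaginary spectrum; its eigenvalues and orthonormal eigenvectors are exactly the $\mu_i$ and $u_i$ of the statement (after verifying that, since $\Sigma^{-1/2}$ is a congruence, the eigenvectors of $N$ match those of $\Sigma^{-1/2}\bar{S}\Sigma^{-1/2}$ in the hypotheses). Rewriting $-\tfrac{1}{2}M+\alpha N = \alpha\bigl(N-\tfrac{1}{2\alpha}M\bigr)$ exhibits the perturbed matrix as $\alpha$ times a small analytic perturbation of $N$.

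Now I would invoke standard first-order perturbation theory (e.g., Kato, Ch.~II): for a normal matrix with simple eigenvalue $\mu_i$ and unit eigenvector $u_i$, the perturbation $N+\varepsilon B$ has an eigenvalue
\begin{equation*}
\mu_i(\varepsilon) \;=\; \mu_i + \varepsilon\, u_i^{*}B u_i + O(\varepsilon^{2})
\end{equation*}
that is analytic in $\varepsilon$ for small $\varepsilon$; simplicity of $\mu_i$ is exactly what allows the standard Rayleigh–Schr\"{o}dinger expansion. Applying this with $B=-\tfrac{1}{2}M$ and $\varepsilon=1/\alpha$, and then multiplying by $\alpha$, yields
\begin{equation*}
\lambda_i(\alpha) \;=\; \alpha\mu_i \;-\; \tfrac{1}{2}\,u_i^{*}M u_i \;+\; O(\alpha^{-1}).
\end{equation*}
Separating real and imaginary parts now gives both claims immediately: $\mu_i$ is purely imaginary and $u_i^{*}Mu_i$ is real (since $M$ is Hermitian), so $\re[\lambda_i]\to-\tfrac{1}{2}u_i^{*}Mu_i$ and $\alpha^{-1}\im[\lambda_i]\to\im[\mu_i]$.

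The main obstacle I anticipate is justifying the perturbation expansion cleanly: one must argue that for all $\alpha$ large enough the eigenvalue $\lambda_i(\alpha)$ can be labeled by a continuous branch converging to $\alpha\mu_i$, which relies on the simplicity hypothesis to keep the branches separated and on the normality of $N$ to guarantee that the leading coefficient is indeed $u_i^{*}(-\tfrac{1}{2}M)u_i$ (and not some more complicated expression involving generalized eigenvectors). Beyond that, the remaining work is bookkeeping to transport everything back through the similarity $T$ and confirm that the eigenvectors $u_i$ used in the perturbation formula are precisely the ones named in the statement.
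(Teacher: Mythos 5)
Your proposal is correct and follows essentially the same route as the paper: the similarity by $\Sigma^{1/2}$, the rescaling by $\varepsilon=1/\alpha$, and first-order eigenvalue perturbation of the normal matrix $\Sigma^{-1/2}\bar S\Sigma^{-1/2}$ under the simplicity hypothesis. The only difference is that the paper derives the first-order coefficient $\mu_{i,1}=-\tfrac12 u_i^*\Sigma^{-1/2}\Sigma_w\Sigma^{-1/2}u_i$ explicitly by matching powers of $\varepsilon$ (using $u_{i,0}^*N=\mu_{i,0}u_{i,0}^*$ from skew-symmetry), whereas you invoke the standard Rayleigh--Schr\"odinger formula directly.
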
\bigskip
\begin{proof}
Observe that $A$ is similar to the matrix
\begin{align*}
    A'&=-\frac{1}{2}\Sigma^{-1/2}\Sigma_w\Sigma^{-1/2} + \alpha \Sigma^{-1/2}\bar{S}\Sigma^{-1/2}\\ &= \frac{1}{\varepsilon}\left(-\frac{\varepsilon}{2}\Sigma^{-1/2}\Sigma_w\Sigma^{-1/2} + \Sigma^{-1/2}\bar{S}\Sigma^{-1/2}\right),
\end{align*}
with $\varepsilon := \alpha^{-1}$.
Consider the matrix $-\frac{\varepsilon}{2}\Sigma^{-1/2}\Sigma_w\Sigma^{-1/2}+\Sigma^{-1/2}\bar{S}\Sigma^{-1/2}$ and let $\{\mu_i(\varepsilon)\}$ and $\{u_i(\varepsilon)\}$ be its eigenvalues and corresponding eigenvectors. 
{The latter eigenvalues and eigenvectors are analytic functions of $\varepsilon$ in a neighbourhood of $\varepsilon=0$, since the eigenvalues of $\Sigma^{-1/2}\bar{S}\Sigma^{-1/2}$ are simple by assumption \cite{magnus1985differentiating}, and we can consider their Taylor expansion around $\varepsilon=0$:}
$$
\mu_i(\varepsilon) = \mu_{i,0} + \mu_{i,1}\varepsilon+\cdots, \quad u_i(\varepsilon) = u_{i,0} + u_{i,1}\varepsilon+\cdots,
$$
where $\mu_{i,0}=\mu_i$ is imaginary. 
Then, 
\begin{align*}
    \left(-\frac{\varepsilon}{2}\Sigma^{-1/2}\Sigma_w\Sigma^{-1/2}+\Sigma^{-1/2}\bar{S}\Sigma^{-1/2}\right)&(u_{i,0}+u_{i,1}\varepsilon+\cdots) \\=  (\mu_{i,0}+\mu_{i,1}\varepsilon+\cdots)&(u_{i,0}+u_{i,1}\varepsilon+\cdots).
\end{align*}
By equating the terms of the same order in the previous expression, we obtain
\begin{align*}
    &\Sigma^{-1/2}\bar{S}\Sigma^{-1/2} u_{i,0} =\mu_{i,0} u_{i,0},& \\
    &\Sigma^{-1\!/2}\bar{S}\Sigma^{-1\!/2} u_{i,1}\! \!-\!\!\frac{1}{2}\Sigma^{-1\!/2}\Sigma_w\Sigma^{-1\!/2} u_{i,0} \!=\! \mu_{i,1} u_{i,0}\! +\! \mu_{i,0} u_{i,1}.
\end{align*}
The first equation is equivalent to $u_{i,0}^*\Sigma^{-1/2}\bar{S}\Sigma^{-1/2}=\mu_{i,0}u^*_{i,0}$, which substituted in the second one (premultiplied by $u_{i,0}^*$) yields
$
    -\frac{1}{2}u_{i,0}^*\Sigma^{-1/2}\Sigma_w\Sigma^{-1/2}u_{i,0}=\mu_{i,1} u_{i,0}^* u_{i,0}.
$
The latter identity in turn implies
\begin{align}    
    \mu_{i,1} &= -\frac{u_{i,0}^* \Sigma^{-1/2}\Sigma_w\Sigma^{-1/2} u_{i,0}}{2\|u_{i,0}\|^2}\notag\\
    &= -\frac{1}{2}u_{i,0}^* \Sigma^{-1/2}\Sigma_w\Sigma^{-1/2} u_{i,0}\in\real.\label{eq:proof-asymp}
\end{align}
Consider now the matrix $\frac{1}{\varepsilon}(-\frac{\varepsilon}{2}\Sigma^{-1/2}\Sigma_w\Sigma^{-1/2}+\Sigma^{-1/2}\bar{S}\Sigma^{-1/2})$. As $\varepsilon\to 0$ the eigenvalues of this matrix satisfy
$
    \frac{1}{\varepsilon}\mu_i(\varepsilon) \to \frac{1}{\varepsilon}(\mu_{i,0}+\varepsilon \mu_{i,1}) = \mu_{i,1} +\frac{1}{\varepsilon} \mu_{i,0}$.
~Then, 
$$\alpha^{-1}\im[\lambda_i] \to \im[\mu_{i,0}]=\im[\mu_i].$$
Moreover, since $\mu_{i,0}$ is purely imaginary, as $\varepsilon\to 0$, from \eqref{eq:proof-asymp} we have
\begin{align*}
    \re\left[\frac{1}{\varepsilon}\mu_i(\varepsilon)\right] &\to \re\left[\frac{1}{\varepsilon}(\mu_{i,0}+\varepsilon \mu_{i,1})\right] = \mu_{i,1} \\&= -\frac{1}{2}u_{i,0}^* \Sigma^{-1/2}\Sigma_w\Sigma^{-1/2} u_{i,0}.
\end{align*}
The thesis follows by noting that $\varepsilon\to 0$ is equivalent to $\alpha\to\infty$ and $\{u_{i,0}\}$ coincide with the (normalized)~eigenvectors~of~$\Sigma^{-1/2}\bar{S}\Sigma^{-1/2}$. 
\end{proof}

Theorem \ref{thm:stab} states that while the real parts of the eigenvalues of $A$ tends to a constant as the size of $S$ grows, their imaginary part diverge to infinity with rate given by the eigenvalues of $\Sigma^{-1/2}\bar{S}\Sigma^{-1/2}$.
It is also possible to prove that the limit values of the real parts fall within the range of the minimum and maximum eigenvalues of the matrix $-\frac{1}{2}\Sigma^{-1/2}\Sigma_w\Sigma^{-1/2}$. This implies that the real parts of the eigenvalues do not exceed the bounds given by the eigenvalues of the matrix $A$ associated with $\alpha = 0$. 

{Moreover, numerical simulations suggest that the real parts of the eigenvalues of $A$ tend to concentrate around the mean as $\alpha$ increases (see also Fig.~\ref{fig:ex1} for a 4-dimensional example).}
\begin{figure}[!h]
    \centering
    \vspace{-0.15cm}
    \includegraphics[width=0.8\linewidth]{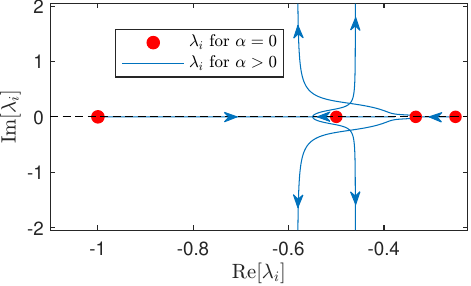}
    \caption{Behavior of the eigenvalues of $A$ for increasing values of $\alpha\geq0$ for a $4$-dimensional system the asymptotic branches corresponds to the values $-\frac{1}{2}u_i^* \Sigma^{-1/2}\Sigma_w\Sigma^{-1/2} u_i$ of Equation \eqref{eq:limit}.\vspace{-0.25cm}}
    \label{fig:ex1}
\end{figure}

\section{Role of $S$ in the excitability of \eqref{eq:model}}\label{sec:excitability}

While the previous analysis is strictly linked to the asymptotic behavior of the system \eqref{eq:model}, in this section we will focus on a different metric that characterizes the transient evolution of the system. 
This metric is the numerical abscissa of~$A$,~defined~as:
\begin{equation*}
    \omega(A) := \lambda_{\max}\left(\frac{A + A^\top}{2}\right).
\end{equation*}
The numerical abscissa quantifies the initial growth rate of the trajectories generated by \eqref{eq:model}. Precisely, it holds \cite{Trefethen1999}:
\begin{equation}\label{eq:omega}
    \frac{\mathrm{d}\| e^{At} \|}{\mathrm{d}t} \bigg|_{t = 0} = \omega(A).
\end{equation}
In view of \eqref{eq:omega}, we say that a system is excitable when $\omega(A)>0$, and non-excitable otherwise. Excitability has been shown to be beneficial for information transmission and controllability of linear dynamical systems \cite{baggio2021non,baggio2022energy}.

The following example provides insights into the relationship between the numerical abscissa of $A$ and the~size~of~$S$. 
\begin{example}{\em\bfseries(2-dimensional~system, cont'd)}\label{ex:numAbscissa}
Consider the system of Example \ref{ex:stab}. The numerical abscissa $\omega(A)$ can be expressed in closed form as:
\begin{equation*}
    \omega(A) = -\frac{\sigma^2}{4}(d_1+d_2) + \frac{\sqrt{\sigma^4+2\alpha^2}}{4}\left(d_1-d_2\right).
\end{equation*}
When $d_1=d_2$ the system is non-excitable for all $\alpha$. When $d_1\ne d_2$ it is excitable for $\alpha > \sigma^2\frac{\sqrt{d_1d_2}}{d_1-d_2}$. Further, as $\alpha$ increases, the numerical abscissa grows unbounded as a linear function of $\alpha$, namely $\omega(A)\approx \frac{d_1 - d_2 }{2\sqrt{2}}\alpha $.\exampleend
\begin{figure}
    \centering
    \begin{tikzpicture}
    \coordinate (O) at (0.5*8/1.75+0.5-1.15,0);
    \coordinate (A) at (0.5*8/1.75+1.25-1,0);
    \coordinate (B) at (0.5*8/1.75+1.25-1,0.65625);
    \draw[-latex] (0,0) -- (4.25,0) node[right] {$\alpha$}; 
    \draw[-latex] (0.5,-1.5) -- (0.5,2.75) node[below left] {$\omega\left(A\right)$}; 
    

    \draw[very thick,domain=0.5:4.25,smooth,variable=\x,blue] plot({\x},{sqrt(1/4+(\x-0.5)*(\x-0.5))/2*1.75-1/4*8+1});
    \draw[domain=0.5:4.25,smooth,variable=\x,red] (B) plot({\x},{1/2*1.75*(\x-0.5)-1/4*8+1});
    \draw[domain=0.5:4.25,smooth,variable=\x,dashed] plot({\x},{1/2*1.75*(\x-0.5)-1/2*25/8+1});
    \node[inner sep=0.05cm,circle,fill=blue] at (0.5,-0.5*25/8+1) {};
    \node[blue,scale=1.25] at (-0.2,-0.5*25/8+0.5+0.5) {$-\frac{\sigma^2}{2}d_2$};
    \pic [draw,red,thick, ->,angle eccentricity=1.5] {angle = A--O--B};
    \node[red,scale=1] at (0.5*8/1.75+0.5,-0.45) {$\tan(\frac{d_1-d_2}{2\sqrt{2}})$};

    \draw[domain=0.5:4.25,smooth,variable=\x,dashed] plot({\x},{1/2*1.75*(\x-0.5)-1/2*39/8+1});
    \end{tikzpicture}
    \caption{Behavior of $\omega(A)$ for increasing values of $\alpha\geq0$ for the system in Example \ref{ex:numAbscissa}. The dashed lines are the lower and upper bound in~Theorem~\ref{prop:stab}.\vspace{-0.25cm}}
    \label{fig:numAbscissa}
\end{figure}
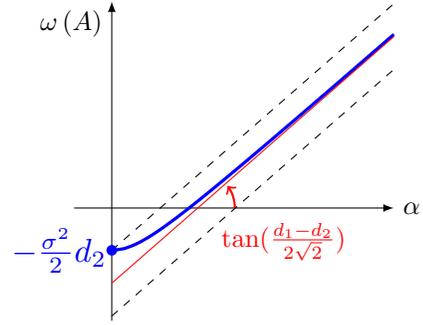
\end{example}

The following theorem generalizes the results of the previous example to the $n$-dimensional case.
\begin{theorem}{\em\bfseries(Relation between $\omega(A)$ and size of $S$)}\label{prop:stab}
Consider the parametrization of $A$ given in Theorem \ref{thm-param} and let $S=\alpha \bar{S}$, where $\alpha\in\real$, $\bar{S}\in\mathcal{S}$, $\bar{S}\ne 0$.  
It holds
\begin{equation}\label{eq:ineq}
\lambda_{\min}(P) + \alpha\lambda_{\max}(M) \leq \omega(A) \leq  \lambda_{\max}(P) + \alpha\lambda_{\max}(M),
\end{equation}
where 
\begin{align*}
    P\! :=\! -\frac{1}{4}\left( \Sigma_w\Sigma^{-1} + \Sigma^{-1} \Sigma_w \right),\ M\! :=\!\frac{1}{2}\left(\bar{S}\Sigma^{-1} - \Sigma^{-1}\bar{S} \right).
\end{align*}
\end{theorem}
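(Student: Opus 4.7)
The plan is to reduce the inequality to a standard eigenvalue bound on a sum of two symmetric matrices. By definition $\omega(A)=\lambda_{\max}((A+A^\top)/2)$, so the first task is to rewrite the symmetric part of $A$ using the parametrization of Theorem~\ref{thm-param}. Substituting $A=-\tfrac{1}{2}\Sigma_w\Sigma^{-1}+\alpha\bar S\Sigma^{-1}$ and exploiting $\Sigma_w^\top=\Sigma_w$, $(\Sigma^{-1})^\top=\Sigma^{-1}$, together with the skew-symmetry $\bar S^\top=-\bar S$, one computes
\begin{equation*}
\frac{A+A^\top}{2}=-\frac{1}{4}\bigl(\Sigma_w\Sigma^{-1}+\Sigma^{-1}\Sigma_w\bigr)+\frac{\alpha}{2}\bigl(\bar S\Sigma^{-1}-\Sigma^{-1}\bar S\bigr)=P+\alpha M.
\end{equation*}
A quick check shows both $P$ and $M$ are symmetric; for $M$ this is exactly what the minus sign between the two products buys, since transposition swaps the factors and the extra sign flip from $\bar S^\top=-\bar S$ restores $M$.

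The second step is to invoke Weyl's inequalities for Hermitian matrices. The upper estimate $\lambda_{\max}(X+Y)\leq\lambda_{\max}(X)+\lambda_{\max}(Y)$ with $X=P$ and $Y=\alpha M$ produces the right-hand side of \eqref{eq:ineq}, while $\lambda_{\max}(X+Y)\geq\lambda_{\min}(X)+\lambda_{\max}(Y)$ produces the left-hand side. Assuming $\alpha\geq 0$, which is the regime of interest given the paper's focus on ``large'' $S$, one has $\lambda_{\max}(\alpha M)=\alpha\lambda_{\max}(M)$, and \eqref{eq:ineq} follows immediately.

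I do not expect any genuine obstacle in this argument: once the symmetric part of $A$ is split as $P+\alpha M$, the rest is a textbook application of Weyl's inequalities. The only step that warrants care, and which I would verify explicitly in the write-up, is the symmetry of $M$, since that identity is precisely what allows the reduction to a spectral inequality on a Hermitian sum to work.
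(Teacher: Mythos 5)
Your proposal is correct and follows essentially the same route as the paper: decompose the symmetric part of $A$ as $P+\alpha M$ and apply Weyl's inequalities. Your explicit remark that $\lambda_{\max}(\alpha M)=\alpha\lambda_{\max}(M)$ requires $\alpha\geq 0$ is a fair point of care that the paper glosses over, but it does not change the argument.
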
\bigskip
\begin{proof}
First, note that $P$ and $M$ are symmetric, and
$
    (A + A^\top)/2
    = P + \alpha M .
$
The bounds in \eqref{eq:ineq} now follows by Weyl's inequality \cite[Thm 4.3.1]{horn2012matrix}, which states that for symmetric matrices $X$, $Y$, $\lambda_{\max}(X+Y) \geq \lambda_{\min}(X) + \lambda_{\max}(Y)$, $\lambda_{\max}(X+Y) \leq  \lambda_{\max}(X) + \lambda_{\max}(Y)$.
\end{proof}

From the bounds in Theorem \ref{prop:stab}, it follows that $\omega(A)$ grows linearly with the size of $S$, as quantified by the parameter $\alpha$, if $\lambda_{\max}(M)>0$. The latter condition is satisfied if and only if $M\ne 0$. In fact, the skew-symmetry of $\bar{S}$ yields $\tr(M)=0$, which implies that $M$ is either the zero matrix or has at least a strictly positive eigenvalue. Fig.~\ref{fig:numAbscissa} shows the behavior of $\omega(A)$ together with the bounds in Theorem \ref{prop:stab} as a function of $\alpha$ for the 2-dimensional system of~Example~\ref{ex:numAbscissa}.

To conclude, we note that the lower bound in Theorem \ref{prop:stab} yields a sufficient condition on $S$ to guarantee excitability. Namely, if $M\ne 0$ and $\alpha > -\lambda_{\min}(P)/\lambda_{\max}(M)$, then $\omega(A)>0$ and the system is excitable.

\section{$S$ and frequency-domain behavior of \eqref{eq:model}}\label{sec:frequency}

In this section, we examine the effect of $S$ on the frequency-domain properties of \eqref{eq:model}. Specifically, we analyze how the size of $S$ affects the energy distribution of the system across different frequencies, as determined by its~power~spectral~density \cite[Chap. 10]{lindquist2015linear}:
\begin{equation*}
    \Phi\left(\imath \omega \right) =  \left( \imath \omega I - A \right)^{-1}\Sigma_w \left( \imath \omega I - A \right)^{-*}, \ \ \omega\in\mathbb{R}.
\end{equation*}
We observe that $S$ has no effect on the overall energy of the system since it holds \cite[Chap. 5, Sec. 2]{astrom1970stochastic}
\begin{equation}\label{eq:Sigma}
    \Sigma = \frac{1}{2\pi}\int_{-\infty}^{+\infty}\Phi\left(\imath \omega \right)\mathrm{d}\omega,
\end{equation}
and the steady-state covariance of the system $\Sigma$ is unaffected by $S$. 
However, $S$ can in principle redistribute the energy across different frequency bands. 
This is indeed the case for two-dimensional systems, where $S$ shifts the energy from low to high frequencies, as we illustrate in the next example. 

\begin{example}{\em\bfseries(2-dimensional~system, cont'd)} Consider the system of Example \ref{ex:stab}. The trace of the power spectral density of the system is given by
\begin{equation}\label{eq:freqRes}
    \tr\left[\Phi\left(\imath \omega \right)\right] =\sigma^2 \frac{2\omega^2 + \zeta(\alpha^2)}{\left[ \omega^2 - d_1d_2\zeta(\alpha^2)\right]^2 + \omega^2\frac{\sigma^4}{4}(d_1+d_2)^2},
\end{equation} 
with $\zeta(\alpha^2) := \alpha^2 + {\sigma^4}/{4}$. From \eqref{eq:freqRes}, it follows that for any fixed $\omega\in\mathbb{R}$, $\tr\left[\Phi\left(\imath \omega \right)\right] \to 0$ as $\alpha \to \infty$, which means that as $\alpha$ grows, the energy of the system moves from low to high frequencies. To better understand how $\alpha$ modifies the shape of $\tr\left[\Phi\left(\imath \omega \right)\right]$, we note that \eqref{eq:freqRes} can be viewed as the modulus of the frequency response of a fictitious second-order scalar system featuring a zero at $-\sqrt{\zeta(\alpha^2)/2}$. As standard practice, we can investigate the emergence of resonance phenomena in the system by neglecting the effect of the zero. That is, we can approximate \eqref{eq:freqRes} as
\begin{equation}\label{eq:freqResApp}
    \tr\big[\tilde{\Phi}(\imath \omega)\big] = \frac{\sigma^2}{\left[\omega^2 - d_1d_2\zeta(\alpha^2)\right]^2 + \omega^2\frac{\sigma^4}{4}(d_1+d_2)^2}.
\end{equation} 
Fig.~\ref{fig:ex3} shows a comparison between \eqref{eq:freqRes} and \eqref{eq:freqResApp}. Using the approximation in \eqref{eq:freqResApp} and restricting the analysis to nonnegative frequencies, it can be shown, {after some calculations}, that for $\alpha > \alpha_{\mathrm{th}}:=\sigma^2\sqrt{d_1^2+d_2^2}/\left(2\sqrt{2d_1d_2}\right)$, $\tr\left[\Phi(\imath \omega)\right]$ exhibits a resonance peak at frequency $\omega_r = \sqrt{8d_1d_2\alpha^2-\sigma^4\left(d_1^2+d_2^2\right)}/\left(2\sqrt{2}\right)$. 
We conclude that as $\alpha$ (that is, the size of $S$) increases the system tends to amplify input signals at one particular frequency, and this frequency grows linearly with $\alpha$. \exampleend

\begin{figure}
    \centering
    \includegraphics[width=0.8\linewidth]{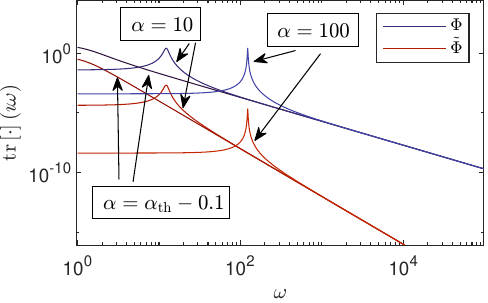}
    \caption{Plot in a logarithmic scale of the trace of the frequency response for the $\tr[\Phi(\imath \omega)]$ and its approximate version $\tr[\tilde{\Phi}(\imath \omega)]$ for different values of $\alpha$, in particular $\alpha = \alpha_{\mathrm{th}} - 0.1, 10, 100$, and $d_1 > d_2 > 0$ have been generated random between $0$ and $10$. $\alpha_{\mathrm{th}}$ denotes the threshold at which the system exhibits resonance.\vspace{-0.25cm}}
    \label{fig:ex3}
\end{figure}
\end{example}

The observations made in the previous example can be generalized as follows.

\begin{theorem}{\em\bfseries(High-pass filtering effect of $S$)} \label{thm:stab}
Consider the parametrization of $A$ given in Theorem \ref{thm-param}, with $S=\alpha \bar{S}$, $\alpha\in\real$, $\bar{S}\in\mathcal{S}$, $\bar{S}\ne 0$. Let
     $$\Phi\left(\imath\omega,\alpha\right) = \left( \imath \omega I - A \right)^{-1}\Sigma_w\left( \imath \omega I - A \right)^{-*}.$$ The following hold:
    \begin{enumerate}
        \item $\exists\, \bar{\omega}\!>\!0$ s.t.~$\tr\left[\Phi\left(\imath\omega,0\right)\right]\! >\! \tr\left[\Phi\left(\imath\omega,\infty\right)\right]$, $\forall\omega$, $|\omega|\!\leq\! \bar{\omega}$;\label{stat:1}
        \item If $\bar{S}$ is full rank,  $\tr\left[\Phi\left(\imath\omega,\alpha\right) \right]\to 0$ as $\alpha\to\infty$,~$\forall \omega\! \in\mathbb{R}$.\label{stat:2}
    \end{enumerate}
\end{theorem}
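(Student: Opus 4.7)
The plan is to express $\tr[\Phi(\imath\omega,\alpha)]$ in a form that isolates the dependence on $\alpha$. Setting $G_\alpha(\omega) := \imath\omega\Sigma + \tfrac{1}{2}\Sigma_w - \alpha\bar S$, a direct computation gives $\imath\omega I - A = G_\alpha(\omega)\Sigma^{-1}$ and $-\imath\omega I - A^\top = \Sigma^{-1}G_\alpha(\omega)^{*}$, so that
\[
\tr[\Phi(\imath\omega,\alpha)] \;=\; \|\Sigma\,G_\alpha(\omega)^{-1}\Sigma_w^{1/2}\|_F^2.
\]
I would use this identity as the common starting point for both claims.

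Statement~(2) is the easier of the two. When $\bar S$ is invertible I would factor $G_\alpha(\omega) = -\alpha\bar S\bigl(I - (\alpha\bar S)^{-1}(\imath\omega\Sigma + \tfrac{1}{2}\Sigma_w)\bigr)$; for large $\alpha$ the second factor is arbitrarily close to $I$ and hence invertible with bounded inverse, giving $\|G_\alpha(\omega)^{-1}\| = O(1/\alpha)$ for each fixed $\omega$. The displayed identity then yields $\tr[\Phi(\imath\omega,\alpha)] = O(1/\alpha^2)\to 0$.

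For statement~(1) I would first establish the strict inequality at $\omega = 0$. Since $\bar S$ is skew, $G_\alpha(0) + G_\alpha(0)^\top = \Sigma_w$, which leads to the useful identity $G_\alpha(0)^{-1}\Sigma_w G_\alpha(0)^{-\top} = G_\alpha(0)^{-1} + G_\alpha(0)^{-\top}$. Applying the congruence $\tilde S := \Sigma_w^{-1/2}\bar S\Sigma_w^{-1/2}$ (still skew) and setting $M := \Sigma_w^{-1/2}\Sigma^2\Sigma_w^{-1/2}\succ 0$, a short calculation reduces the trace to the compact form
\[
\tr[\Phi(0,\alpha)] \;=\; \tr\!\Bigl[M\bigl(\tfrac{1}{4}I + \alpha^2\,\tilde S^\top\tilde S\bigr)^{-1}\Bigr].
\]
Since $\tilde S^\top\tilde S\succeq 0$, the bracketed matrix is monotonically decreasing (in the Loewner order) in $\alpha^2$ and, as $\alpha\to\infty$, converges to $4P_{\ker\tilde S}$. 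Combined with $M\succ 0$ and $\bar S\neq 0$ (so $(\ker\tilde S)^\perp$ is nontrivial), this gives
\[
\tr[\Phi(0,0)] - \tr[\Phi(0,\infty)] \;=\; 4\,\tr\!\bigl[M\,P_{(\ker\tilde S)^\perp}\bigr] \;>\; 0.
\]

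To promote this strict inequality from $\omega = 0$ to a neighborhood $|\omega| \leq \bar\omega$, I would rely on continuity: $\omega\mapsto\tr[\Phi(\imath\omega,0)]$ is rational in $\omega$, and continuity of $\omega\mapsto\tr[\Phi(\imath\omega,\infty)]$ follows from a block Schur-complement analysis of $G_\alpha(\omega)$ along the decomposition $\real^n = \ker\bar S \oplus (\ker\bar S)^\perp$, which identifies the pointwise limit as $\alpha\to\infty$ with a rational (hence continuous) function of $\omega$ acting only on the $\ker\bar S$ block (and as $0$ everywhere when $\bar S$ is invertible). This last step is where I expect the main difficulty: when $\bar S$ is singular, the residual low-frequency response along $\ker\bar S$ must be shown to exist and depend continuously on $\omega$ uniformly in $\alpha$ before the continuity argument closes.
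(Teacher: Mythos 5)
Your argument is correct where it is carried out, but it follows a genuinely different route from the paper's. The paper works with the inverse $F(\imath\omega,\alpha):=\Phi(\imath\omega,\alpha)^{-1}=(\imath\omega I-A)^{*}\Sigma_w^{-1}(\imath\omega I-A)$, expands it as $\tfrac{1}{4}\Sigma^{-1}\Sigma_w\Sigma^{-1}$ plus a quadratic form in $(\omega,\alpha)$, and bounds that quadratic from below by scalar multiples of the identity: point~2 follows from $F\succeq(\tfrac{\tilde a}{4}+\tfrac{c}{2}\tilde b\alpha^{2})I$ on $|\omega|\leq\bar\omega$ (which is in fact uniform on compact frequency intervals, slightly more than the pointwise claim), and point~1 from the Loewner monotonicity $F(0,\alpha)\succeq F(0,0)$ plus a trace argument for strictness. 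You instead work with $\Phi$ itself through the factorization $\imath\omega I-A=G_\alpha(\omega)\Sigma^{-1}$. Your proof of point~2, via $\|G_\alpha(\omega)^{-1}\|=O(1/\alpha)$ from a Neumann-type argument, is shorter and more transparent than the paper's. Your treatment of $\omega=0$ is sharper: the identity $\tr[\Phi(0,\alpha)]=\tr[M(\tfrac{1}{4}I+\alpha^{2}\tilde S^{\top}\tilde S)^{-1}]$ is exact (it follows from $G_\alpha(0)^{\top}\Sigma_w^{-1}G_\alpha(0)=\Sigma_w^{1/2}(\tfrac{1}{4}I+\alpha^{2}\tilde S^{\top}\tilde S)\Sigma_w^{1/2}$), it shows that $\tr[\Phi(0,\alpha)]$ decreases monotonically in $\alpha^{2}$, identifies the limit as $4\tr[MP_{\ker\tilde S}]$, and gives the gap $4\tr[MP_{(\ker\tilde S)^{\perp}}]>0$ in closed form, where the paper only obtains strictness of $\tr[F(0,0)^{-1}]>\tr[F(0,\alpha)^{-1}]$ indirectly.

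The step you leave open --- existence and continuity in $\omega$, near $\omega=0$, of $\tr[\Phi(\imath\omega,\infty)]$ when $\bar S$ is singular, needed to promote the strict inequality from $\omega=0$ to a band $|\omega|\leq\bar\omega$ --- is a genuine gap in your write-up, but it is precisely the step the paper itself dispatches with a one-line appeal to ``the continuity of $\Phi(\imath\omega,\alpha)$ with respect to $\omega$'', without addressing continuity of the $\alpha\to\infty$ limit. (When $\bar S$ is full rank the issue is vacuous: point~2 gives $\tr[\Phi(\imath\omega,\infty)]=0$ for every $\omega$, so point~1 is immediate; the only delicate case is singular $\bar S$.) Your proposed Schur-complement analysis along $\real^{n}=\ker\bar S\oplus(\ker\bar S)^{\perp}$ is the right tool: in that basis $F(\imath\omega,\alpha)$ has a $(2,2)$ block of order $\alpha^{2}$, off-diagonal blocks of order $\alpha$, and a $(1,1)$ block of order $1$, so the Schur complement of the $(2,2)$ block converges as $\alpha\to\infty$ to a rational, hence continuous, function of $\omega$; its inverse is the surviving block of $\Phi(\imath\omega,\infty)$ and the remaining blocks vanish. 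Carrying this out would make your argument complete and, on this point, more rigorous than the published proof.
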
\smallskip
\begin{proof}
Let us define $F(\imath\omega,\alpha) := \left( \imath \omega I - A_{\alpha} \right)^*\Sigma_w^{-1}\left( \imath \omega I - A_{\alpha} \right)$ so that $\Phi\left(\imath \omega,\alpha \right) = F(\imath\omega,\alpha)^{-1}$. Then, it holds
    \begin{equation}\label{eq:princEq}
        F(\imath\omega,\alpha) = \frac{\Sigma^{-1}\Sigma_w\Sigma^{-1}}{4} + \widetilde{F}\left(\imath\omega,\alpha\right),
    \end{equation}
    where 
    \begin{equation}\label{eq:Ftilde}
    \begin{aligned}
    \widetilde{F}\left(\imath\omega,\alpha\right) :=\,& \omega^2 \Sigma_w^{-1} +\omega \alpha \imath\left( \Sigma_w^{-1}\bar{S}\Sigma^{-1} + \Sigma^{-1}\bar{S}\Sigma_w^{-1}\right)\\
    &+\alpha^2\Sigma^{-1}\bar{S}^\top\Sigma_w^{-1}\bar{S}\Sigma^{-1}.
    \end{aligned}
    \end{equation}
    By setting $\omega = 0$, \eqref{eq:princEq} becomes
    \begin{align}\label{eq:changeBas}
    F\left(0,\alpha\right) &= \frac{\Sigma^{-1}\Sigma_w\Sigma^{-1}}{4} + \alpha^2\Sigma^{-1}\bar{S}^\top\Sigma_w^{-1}\bar{S}\Sigma^{-1}.
    \end{align}
    Note that $\Sigma^{-1}\bar{S}^\top\Sigma_w^{-1}\bar{S}\Sigma^{-1}\succeq 0$ since $\Sigma_w,\Sigma \succ 0$. This implies that, for all $\alpha> 0$, $F\left(0,\alpha\right) \succeq F\left(0,0\right)\succ 0$, or, equivalently, $F\left(0,0\right)^{-1} \succeq F\left(0,\alpha\right)^{-1}\succ 0$. In addition, if $\alpha>0$, there is at least one eigenvalue of $F\left(0,0\right)^{-1}$ that is not an eigenvalue of $F\left(0,\alpha\right)^{-1}$.\footnote{Indeed, assume by contradiction that $F\left(0,0\right)^{-1}$ and $F\left(0,\alpha\right)^{-1}$ have the same eigenvalues. This implies that $F\left(0,0\right)$ and $F\left(0,\alpha\right)$ have the same eigenvalues, and therefore the same trace. However, from \eqref{eq:changeBas}, if $\alpha>0$, $\tr[F\left(0,\alpha\right)]> \tr[F\left(0,0\right)]$ since $\bar{S}\ne 0$.} Therefore, for all $\alpha>0$,
    $
    \tr[F\left(0,0\right)^{-1}] > \tr[F\left(0,\alpha\right)^{-1}].
    $
    Point \ref{stat:1}) now follows from the latter inequality and the continuity of $\Phi\left(\imath\omega,\alpha\right)$ with respect to $\omega$.

   To prove point \ref{stat:2}), we introduce the following notation:
\begin{align*}
    &\tilde{a}:= \lambda_{\min}\left(\Sigma^{-1}\Sigma_w\Sigma^{-1}\right), \ \ \tilde{b} := \lambda_{\min}\left(\Sigma_w^{-1}\right),\\
    &\tilde{c} := \lambda_{\min}\left(\Sigma^{-1}\bar{S}^\top\Sigma_w^{-1}\bar{S}\Sigma^{-1} \right),\ \ \tilde{d} := \lVert \Sigma^{-1}\bar{S}\Sigma_w^{-1} \rVert,
\end{align*}
    where $\tilde{a}, \tilde{b}, \tilde{c}, \tilde{d}> 0$ from the the positive definiteness of $\Sigma$, $\Sigma_w$, and the full rank condition of $\bar{S}$. The following hold:
    \begin{align}
    \Sigma^{-1}\Sigma_w\Sigma^{-1} &\succeq \tilde{a}I, \label{eq:a}\\
    \Sigma_w^{-1} &\succeq \tilde{b}I,\label{eq:b}\\
    \Sigma^{-1}\bar{S}^\top\Sigma_w^{-1}\bar{S}\Sigma^{-1} &\succeq \tilde{c}I.\label{eq:c}
    \end{align}
    It remains to bound the term $\omega \alpha \imath\left( \Sigma_w^{-1}\bar{S}\Sigma^{-1} + \Sigma^{-1}\bar{S}\Sigma_w^{-1}\right)$ in \eqref{eq:Ftilde}. First, we see that $\Sigma_w^{-1}\bar{S}\Sigma^{-1} + \Sigma^{-1}\bar{S}\Sigma_w^{-1}$ is skew-symmetric so that $\imath\left( \Sigma_w^{-1}\bar{S}\Sigma^{-1} + \Sigma^{-1}\bar{S}\Sigma_w^{-1}\right)$ is an Hermitian matrix. 
    So it holds
    \begin{equation}\label{eq:bound1Skew}
        \begin{split}&\imath\left( \Sigma_w^{-1}\bar{S}\Sigma^{-1} + \Sigma^{-1}\bar{S}\Sigma_w^{-1}\right) \preceq 2 \tilde{d}I.
        \end{split}
    \end{equation}
    We also know that the eigenvalues of $\imath\left( \Sigma_w^{-1}\bar{S}\Sigma^{-1} + \Sigma^{-1}\bar{S}\Sigma_w^{-1}\right)$ are symmetric with respect to zero. Thus, \eqref{eq:bound1Skew} yields the bounds
    $-2\tilde{d}I \preceq \imath\left( \Sigma_w^{-1}\bar{S}\Sigma^{-1} + \Sigma^{-1}\bar{S}\Sigma_w^{-1}\right) \preceq 2 \tilde{d}I$, which in turn implies
 \begin{equation}\label{eq:finalBound}
         \omega\imath\left( \Sigma_w^{-1}\bar{S}\Sigma^{-1} + \Sigma^{-1}\bar{S}\Sigma_w^{-1}\right) \succeq -2|\omega| \tilde{d}I.
     \end{equation}   
      By applying \eqref{eq:b}, \eqref{eq:c} and \eqref{eq:finalBound} to $\widetilde{F}\left(\imath\omega,\alpha\right)$ in \eqref{eq:princEq}, we obtain
    \begin{align}\label{eq:ineqFTilde}
    \widetilde{F}\left(\imath\omega,S\right) & \succeq \left[\omega^2\tilde{b} -2|\omega|\alpha \tilde{d}+\alpha^2\tilde{c}\right]I \notag \\
    &= \tilde{b}\underbrace{\left[\omega^2 -2|\omega|\alpha{\frac{\tilde{d}}{\tilde{b}}}+\alpha^2{\frac{\tilde{c}}{\tilde{b}}}\right]}_{=: f\left( \imath\omega,\alpha\right)}I.
    \end{align}
    Let $c:={\tilde{c}}/{\tilde{b}}$, $d := {\tilde{d}}/{\tilde{b}}$ and fix any $\bar{\omega} > 0$. 
     Then, for all $|\omega| \leq \bar{\omega}$ and $\alpha \geq \bar{\alpha}:={4d\bar{\omega}}/{c}$  we~have~that:
\begin{align*}
    &f\left( \imath\omega,\alpha\right) = \omega^2 - 2|\omega |\alpha d + \alpha^2 c\\
    & \geq-2|\omega|\alpha d + \alpha^2c=\alpha^2\left[c - 2 \frac{|\omega|}{\alpha}d\right]\geq \alpha^2\left[c-\frac{c}{2}\right] = \alpha^2\frac{c}{2},
\end{align*} 
where the last inequality follows from ${|\omega|}/{\alpha} \leq {\bar{\omega}}/{\bar{\alpha}}= c/(4d)$.
From the latter bound and \eqref{eq:ineqFTilde},
\begin{equation}\label{eq:ineqFTildeFinal}
    \widetilde{F}\left(\imath\omega,\alpha\right) \succeq \tilde{b}\,\alpha^2\frac{c}{2}I.
\end{equation}
By substituting \eqref{eq:ineqFTildeFinal} and \eqref{eq:a} in  \eqref{eq:princEq} we get
\begin{equation*}
    F\left(\imath\omega,\alpha\right) \succeq \left[\frac{\tilde{a}}{4} + \tilde{b}\alpha^2\frac{c}{2}\right]I \quad \text{ if } |\omega| \leq \bar{\omega}.
\end{equation*}
Point \ref{stat:2}) now follows from the latter inequality and the fact that $\bar{\omega}$ can be chosen arbitrarily.  
\end{proof}

Loosely speaking, Theorem \ref{thm:stab} says that when $S$ is large the system attenuates inputs at low frequencies. In particular, if $S$ has full rank and sufficiently large size, low-frequency inputs are completely blocked by the system. Notably, since the overall spectral energy is preserved as $S$ varies (cf.~Eq.~\eqref{eq:Sigma}), Theorem \ref{thm:stab} implies that as $S$ grows the spectral energy of the system moves from low to high frequencies.

\section{Concluding remarks}\label{sec:conclusions}

In this paper, we analyze the properties of stochastic linear systems featuring a prescribed steady-state covariance. These systems are used to model dynamic brain networks generating a fixed resting-state functional connectivity. We show that the state matrix of these systems can be parametrized by a skew-symmetric matrix $S$, and we study the role of $S$ in the stability, excitability, and frequency-domain behavior of the system. By means of theoretical and numerical results, we find that as the entries of $S$ grow the system becomes more stable, excitable, and similar to~an~high-pass~filter.

There are several intriguing directions of future research. In particular, it would be interesting to examine in depth the statistical interpretation of $S$ of Remark \ref{rmk:statistical} and analyze how $S$ affects other system properties, e.g.,~controllability. 

\bibliographystyle{IEEEtran}
\bibliography{sample}

\begin{thebibliography}{10}
\providecommand{\url}[1]{#1}
\csname url@samestyle\endcsname
\providecommand{\newblock}{\relax}
\providecommand{\bibinfo}[2]{#2}
\providecommand{\BIBentrySTDinterwordspacing}{\spaceskip=0pt\relax}
\providecommand{\BIBentryALTinterwordstretchfactor}{4}
\providecommand{\BIBentryALTinterwordspacing}{\spaceskip=\fontdimen2\font plus
\BIBentryALTinterwordstretchfactor\fontdimen3\font minus
  \fontdimen4\font\relax}
\providecommand{\BIBforeignlanguage}[2]{{%
\expandafter\ifx\csname l@#1\endcsname\relax
\typeout{** WARNING: IEEEtran.bst: No hyphenation pattern has been}%
\typeout{** loaded for the language `#1'. Using the pattern for}%
\typeout{** the default language instead.}%
\else
\language=\csname l@#1\endcsname
\fi
#2}}
\providecommand{\BIBdecl}{\relax}
\BIBdecl

\bibitem{bullmore2009complex}
E.~Bullmore and O.~Sporns, ``Complex brain networks: graph theoretical analysis
  of structural and functional systems,'' \emph{Nature reviews neuroscience},
  vol.~10, no.~3, pp. 186--198, 2009.

\bibitem{friston2011functional}
K.~J. Friston, ``Functional and effective connectivity: a review,'' \emph{Brain
  connectivity}, vol.~1, no.~1, pp. 13--36, 2011.

\bibitem{bowman2014brain}
F.~D. Bowman, ``Brain imaging analysis,'' \emph{Annual review of statistics and
  its application}, vol.~1, pp. 61--85, 2014.

\bibitem{bijsterbosch2017introduction}
J.~Bijsterbosch, S.~M. Smith, and C.~Beckmann, \emph{An introduction to resting
  state {fMRI} functional connectivity}.\hskip 1em plus 0.5em minus 0.4em\relax
  Oxford University Press, 2017.

\bibitem{friston2003dynamic}
K.~J. Friston, L.~Harrison, and W.~Penny, ``Dynamic causal modelling,''
  \emph{Neuroimage}, vol.~19, no.~4, pp. 1273--1302, 2003.

\bibitem{frassle2018generative}
S.~Fr{\"a}ssle, E.~I. Lomakina, L.~Kasper, Z.~M. Manjaly, A.~Leff, K.~P.
  Pruessmann, J.~M. Buhmann, and K.~E. Stephan, ``A generative model of
  whole-brain effective connectivity,'' \emph{Neuroimage}, vol. 179, pp.
  505--529, 2018.

\bibitem{prando2020sparse}
G.~Prando, M.~Zorzi, A.~Bertoldo, M.~Corbetta, M.~Zorzi, and A.~Chiuso,
  ``Sparse {DCM} for whole-brain effective connectivity from resting-state
  {fMRI} data,'' \emph{NeuroImage}, vol. 208, p. 116367, 2020.

\bibitem{gindullina2021estimating}
E.~Gindullina, M.~Zorzi, A.~Bertoldo, and A.~Chiuso, ``Estimating effective
  connectivity using brain partitioning,'' in \emph{2021 60th IEEE Conference
  on Decision and Control (CDC)}, 2021, pp. 1574--1579.

\bibitem{friston2011network}
K.~J. Friston, B.~Li, J.~Daunizeau, and K.~E. Stephan, ``Network discovery with
  dcm,'' \emph{Neuroimage}, vol.~56, no.~3, pp. 1202--1221, 2011.

\bibitem{lin2017differential}
T.~W. Lin, A.~Das, G.~P. Krishnan, M.~Bazhenov, and T.~J. Sejnowski,
  ``Differential covariance: A new class of methods to estimate sparse
  connectivity from neural recordings,'' \emph{Neural computation}, vol.~29,
  no.~10, pp. 2581--2632, 2017.

\bibitem{chen2022dynamical}
Y.~Chen, B.~Q. Rosen, and T.~J. Sejnowski, ``Dynamical differential covariance
  recovers directional network structure in multiscale neural systems,''
  \emph{Proceedings of the National Academy of Sciences}, vol. 119, no.~24, p.
  e2117234119, 2022.

\bibitem{astrom1970stochastic}
K.~{\AA}str{\"o}m, \emph{Introduction to stochastic control theory}.\hskip 1em
  plus 0.5em minus 0.4em\relax Academic Press, 1970.

\bibitem{hespanha2018linear}
J.~P. Hespanha, \emph{Linear systems theory}.\hskip 1em plus 0.5em minus
  0.4em\relax Princeton university press, 2018.

\bibitem{benozzo2023macroscale}
D.~Benozzo, G.~Baron, L.~Coletta, A.~Chiuso, A.~Gozzi, and A.~Bertoldo,
  ``Macroscale coupling between structural and effective connectivity in the
  mouse brain,'' \emph{bioRxiv}, pp. 2023--02, 2023.

\bibitem{magnus1985differentiating}
J.~R. Magnus, ``On differentiating eigenvalues and eigenvectors,''
  \emph{Econometric theory}, vol.~1, no.~2, pp. 179--191, 1985.

\bibitem{Trefethen1999}
L.~N. Trefethen, \emph{Spectra and Pseudospectra}.\hskip 1em plus 0.5em minus
  0.4em\relax Berlin, Heidelberg: Springer Berlin Heidelberg, 1999, pp.
  217--250.

\bibitem{baggio2021non}
G.~Baggio and S.~Zampieri, ``Non-normality improves information transmission
  performance of network systems,'' \emph{IEEE Transactions on Control of
  Network Systems}, vol.~8, no.~4, pp. 1846--1858, 2021.

\bibitem{baggio2022energy}
G.~Baggio, F.~Pasqualetti, and S.~Zampieri, ``Energy-aware controllability of
  complex networks,'' \emph{Annual Review of Control, Robotics, and Autonomous
  Systems}, vol.~5, pp. 465--489, 2022.

\bibitem{horn2012matrix}
R.~A. Horn and C.~R. Johnson, \emph{Matrix analysis}.\hskip 1em plus 0.5em
  minus 0.4em\relax Cambridge university press, 2012.

\bibitem{lindquist2015linear}
A.~Lindquist and G.~Picci, ``Linear stochastic systems,'' \emph{Series in
  Contemporary Mathematics}, vol.~1, 2015.

\end{thebibliography}

\end{document}